\documentclass[sigconf]{acmart-mod}

\usepackage[ruled,vlined]{algorithm2e}
\usepackage{amsthm}

\AtBeginDocument{%
  \providecommand\BibTeX{{%
    \normalfont B\kern-0.5em{\scshape i\kern-0.25em b}\kern-0.8em\TeX}}}

\newtheorem{theorem}{Theorem}[section]

\newtheorem{lemma}{Lemma}[theorem]
\newtheorem{proposition}{Proposition}[theorem]
\theoremstyle{definition}
\newtheorem{definition}{Definition}[section]
\theoremstyle{definition}

\begin{document}

\title{Permutation Encoding for Text Steganography: A Short Tutorial}

\author{George D.\ Monta\~{n}ez}
\email{gmontanez@g.hmc.edu}
\orcid{0000-0002-1333-4611}
\affiliation{%
  \institution{AMISTAD Lab \\Harvey Mudd College}
  \streetaddress{301 Platt Blvd.}
  \city{Claremont}
  \state{CA}
  \country{USA}
  \postcode{91711}
}

\renewcommand{\shortauthors}{Monta\~{n}ez}
\newcommand{\A}{\mathbb{A}}

\begin{abstract}
  We explore a method of encoding secret messages using factoradic numbering of permuted lists of text or numeric elements. Encoding and decoding methods are provided, with code, and key aspects of the correctness of the methods are formally proven. The method of encoding is simple and provides a working example of using textual and numeric lists as a stenagographic channel. Given the ubiquity of lists, such channels are already present but are often unused.
\end{abstract}

\maketitle

\section{Introduction}


In many correctional facilities in the US, all outgoing inmate correspondence is examined and read before being mailed out. Of all the mail leaving one particular facility, Arion's letters were the most scrutinized. Arion was a gang boss who had repeatedly been caught sending orders to outside gang members through his letters. While past codes that used alternating capitalization and other alphabetic word patterns were eventually discovered, a fear had emerged that illegal messages were somehow still escaping officer detection. For one, his orders continued to be carried out; second, Arion's letters had become more numerous. They had also become more mundane, giving lists of his favorite activities, foods, and books, and in one case,  his ``top 15'' gangster movies:
{\small
\begin{enumerate}
\item[1.] AMERICAN GANGSTER
\item[2.] ONCE UPON A TIME IN AMERICA
\item[3.] THE GODFATHER
\item[4.] THE GODFATHER III
\item[5.] CARLITO’S WAY
\item[6.] THE UNTOUCHABLES
\item[7.] GOODFELLAS
\item[8.] GET CARTER
\item[9.] WHITE HEAT
\item[10.] KING OF NEW YORK
\item[11.] PUBLIC ENEMY
\item[12.] A BRONX TALE
\item[13.] DONNIE BRASCO
\item[14.] THE GODFATHER II
\item[15.] SCARFACE
\end{enumerate}
}

Aside from the questionable placement of Scarface at the bottom of the list, nothing about the list appeared out of place; all movies listed were actual gangster movies, and all were well-received, appearing on other ``top'' movie lists online. No code involving equidistant letter patterns could be found; since all movies were capitalized, neither could any code be found involving capitalization. Eventually, a code was discovered. Evidence was found that this list contained instructions, encoded using permutation information of the elements making up the list. Once deciphered, the message was clear and harrowing: \textit{bury him}. Other messages embedded in his previous lists were also uncovered.

Permutations represent a change in state, and what can change states can be used to store information, and by extension, transmit messages. Permutation codes are one way to represent the permutations of a sequence of objects, which have grown in popularity due to their potential application to powerline communications \cite{vinck2011coded,chu2004constructions} and their useful (if limited) error-correction capabilities \cite{makur2020bounds,smith2012new,vinck2011coded}. Permutation codes, like Lehmer codes \cite{diallopermutation}, are constructed to preserve provable minimum distances between code words (leading to their error-correction capabilities \cite{cameron2010permutation,smith2012new,makur2020bounds}), and although studies of their applications and theoretical properties have seen increasing recent research interest \cite{cameron2010permutation,diallopermutation,vinck2011coded,smith2012new,makur2020bounds}, the essential idea of using permutations to store and transmit information dates back to at least 1965 \cite{david1965permutation}. The factorial number systems on which they are built date back to at least the 19th century~\cite{factoradic}.

We concern ourselves here with using permutations as a basic steganographic channel that can be employed whenever a list is given which has some accepted canonical (or otherwise pre-established) total order. While the channel is not cryptographically secure (being trivially decipherable once the algorithm and baseline ordering are known), the fact that innocuous lists are given in many diverse contexts (including rosters, inventory records, or in our previous example, ``top $K$'' lists) implies that such a channel is often already present yet unused. Furthermore, the curse of dimensionality \cite{koppen2000curse} affects the permutation scaling behavior of lists with many elements; given $n$ items there are $n!$ possible permutations of them, and if an arbitrary permutation is chosen as the baseline ordering, the probability of discovering embedded messages using permutation encoding drops precipitously. For example, given 100 elements to permute, there are approximately $9.33\times 10^{157}$ possible permutations, which is larger than the number of subatomic particles in the known universe. Clearly it is infeasible to search through all the possible permutations to find one for which English-language messages appear using the decoding algorithm. Still, we make no strong claims about the security of such an encoding channel (for the sake of argument accepting its insecurity, at least in the case of a known canonical order such as lexicographic ordering), and only note the channel's ready availability and rare usage. 

The ideas presented here are not new (first being presented in \cite{chakinala2006steganographic}, and later being independently rediscovered in \cite{montanez2011information}). We review an improved version of the encoding and decoding algorithms of Monta\~nez \cite{montanez2011information}, based on factorial (factoradic) numbering~\cite{knuthfactorial,factoradic} and originally constructed for Latin alphabet messages, extending the methods to account for arbitrary alphabets. We prove the correctness of key aspects of the algorithms, make some observations concerning the behavior and scaling of the channel, suggest improvements, and highlight some limitations and opportunities for future work. We begin with descriptions of the factoradic encoding and decoding algorithms.

\section{Overview}

The basic idea of the encoding algorithm is that, given an ordered list of elements, it is possible to encode messages based on the ordering of the elements in the list. We begin with an alphabet $\A$, consisting of some finite set of distinguishable elements. As a concrete example, we might consider the 26 Latin alphabetic letters [a--z] plus a space character, allowing us to interpret each nonempty message in $\A^*$ as a base-27 number. Given such a number and a predetermined baseline ordering over the original list of elements, we can map that number to a specific permutation of a list of elements. The conversion process is to first convert the plaintext message to a number, then map that number to its factoradic permutation. To decode, we recover a number from the permuted ordering of items in the list, then convert that base-10 number into a base-$b$ number, which recovers the original plaintext message.

\section{Encoding Algorithm}

Let $x$ be a finite sequence drawn from alphabet $\A$ of size $b = |\A|$, with $0 < b < \infty$. The encoding algorithm consists of the following steps:
    \begin{enumerate}
        \item Convert $x$ into a base-10 number $s$.
        \item Convert $s$ to an ordered list of integers, which can subsequently be used to index a baseline list.
    \end{enumerate}

\subsection{Converting to Base 10} 

We convert $x$ from a base-$b$ number to a base-10 number as follows. Let $x = x_0x_1 \ldots x_{m-1}$ with $x_i$ denoting the $i$th letter of $x$ for $0 \leq i < m$. We calculate $s$, the base-10 representation of $x$, as
\[
    s = \sum_{i=0}^{m-1} b^i \cdot \mbox{index}(x_i, \A)
\]
where $\mbox{index}(x_i, \A)$ denotes the index of element $x_i$ in $\A$.

\subsection{Converting $s$ to a List of Integers} 

We map $s$ to a permuted list of $n$ integers where $n$ is the smallest number such that $n! > s$. The output, $q$, of this process will eventually be used to index into an ordered set of elements (such as a lexicographically-ordered list of gangster movies), giving us our steganographic channel. Let $r = [0,\ldots,n-1]$ represent a list containing the integers $0$ to $n-1$ in ascending order. It is from this list $r$ that we will select indices to output for our permuted list.

At each step of encoding, we must decide which element of $r$ to select for the next index to be added to $q$. The number of possibilities at each step is the number of items remaining in $r$, and at each step we select and remove a single item from the list. Let $s_i$ denote the remaining sum at step $i$ when counting down from $n$ to $1$, with $i = |r|$ and $s_n := s$. Let $d_i$ denote the zero-based index of an item in $r$ that we choose as the next item to remove and place in the output list $q$. We define $d_i$ as
\begin{equation*}
    d_i := \left\lfloor \frac{s_i}{(i - 1)!} \right\rfloor
\end{equation*}
and the remaining sum $s_i$ is then updated at each step as
\begin{align}\label{eq:s-recurrence}
    s_{i-1} &= s_{i} - d_i  \cdot (i - 1)!.
\end{align}
The procedure is repeated until all items are removed from $r$ and placed in the output list. This list can then be used to reorder the elements of the baseline steganographic list, using each integer as an index into the baseline ordering of elements. For example, if the baseline ordered list was
\begin{verbatim}
['A BRONX TALE', 'AMERICAN GANGSTER', 'CARLITO’S WAY']
\end{verbatim}
and the output list of integers $q$ was \verb+[2,0,1]+, we would reorder our steganographic list as 
\begin{verbatim}
['CARLITO’S WAY', 'A BRONX TALE', 'AMERICAN GANGSTER'].
\end{verbatim}

We note that, beginning with $s_0 = 0$, Equation~\ref{eq:s-recurrence} gives us a way to reverse the encoding process and recover $s$ given a permuted sequence of numbers $q$ and a baseline permutation to compare against, since $s_{i-1} = s_{i} - d_i  \cdot (i - 1)!$ implies $s_{i} = s_{i-1} + d_i  \cdot (i - 1)!$. Thus, given $q$, we can uniquely recover $d_i$ at each step of decoding, multiply it against $(i-1)!$, and get the next $s_i$ in our sequence, until we reach $s_n$. After giving a summary of the decoding process, we will prove the correctness of this procedure,  demonstrating that the decoding process correctly recovers the original message (modulo messages with trailing ``zeroes'') and proving $s_0 = 0$, as is required by the decoding process. Algorithm~\ref{alg:ENCODING} gives pseudocode for the encoding method, with an example Python implementation given in the Appendix.

\begin{algorithm}[htbp!]
 \KwData{Plaintext message $x$, alphabet base $b$}
 \KwResult{Permuted list of integers $q$}
 Convert base-$b$ message $x$ to base-10 number $s$\;
 Set $n$ to smallest positive integer such that $n! > s$\;
 Set $r := [0,\ldots,n-1]$, $q := [ ]$, and $s_n := s$\;
 \For{$i=n$ \KwTo $1$}{
  Set $d_i := \left\lfloor \frac{s_{i}}{(i-1)!} \right\rfloor$ \;
  Set $s_{i-1} := s_{i} - d_i \cdot (i-1)!$\;
  Append item $r[d_i]$ to $q$\;
  Remove item $r[d_i]$ from $r$ \;
 }
 \Return{q}
 \caption{Pseudocode for Encoding Algorithm}
 \label{alg:ENCODING}
\end{algorithm}

\section{Decoding Algorithm}

Like the encoding algorithm, the decoding algorithm is also simple. We recover the plaintext message from a list of integers using the following steps:
    \begin{enumerate}
        \item Convert the ordered list of integers into a base-10 number, $s$.
        \item Convert $s$ to a base-$b$ number, i.e., the plaintext message $x$.
    \end{enumerate}

\subsection{Converting List of Integers to Base 10} 

The first step of decoding is to transform the ordered list of integers, $q$, back to a base-10 number. To do so, we first create another ordered list $r$, which is a canonically- (or baseline-) sorted version of the list $q$. We iterate over the items in $q$, beginning at position 0, and find the index of the integer at that position in list $r$. We then multiply that index by $(|r| - 1)!$ and add this to our running total. Lastly, we remove the item from $r$, and continue until we have iterated over all items in list $q$. This effectively reverses the steps of the factoradic encoding process. 

To better understand this reversal, we note that at each step of decoding we have a remaining list of items $r$, with element indices ranging from 0 to $|r| - 1$. During encoding, the index number chosen to output at each step is the maximum such that the index multiplied by $(|r| - 1)!$ is less than or equal to $s_i$. In decoding, we use the number chosen during the encoding step to tell us how many copies of $(|r| - 1)!$ we subtracted from $s_i$ at step $i$, being the maximum possible, and so we calculate this product and add it to our running total. This gives us a greedy method of recovering our base-10 number from the permuted list of integers. Repeating this step for all positions of $q$ allows us to reconstruct the integer $s$.

\subsection{Converting from Base 10 to Base b} 

The decoding algorithm converts $s$ from a base-10 integer to the base-$b$ plaintext message $x$ in the standard manner, by  repeatedly dividing by the conversion base and adding the remainder to output at that step. Pseudocode for decoding is given in Algorithm~\ref{alg:DECODING} below.

\begin{algorithm}[htbp!]
 \KwData{Encoded permutation list $q$, alphabet $A$}
 \KwResult{Plaintext message $x$}
 Copy $q$ to new list $r$\;
 Sort $r$ according to baseline (or canonical) ordering\;
 Set $s_0 := 0$\;
 \For{$i=n$ \KwTo $1$}{
  Set $j :=$ index($q[n-i]$, $r$)\;
  Set $s_{n-i+1} := s_{n-i} + j \cdot (i-1)!$\;
  Remove item $r[j]$ from $r$\;
 }
 Convert base-10 number $s$ to base-$b$ message $x$ using $A$\;
 \Return{x}
 \caption{Pseudocode for Decoding Algorithm}
  \label{alg:DECODING}
\end{algorithm}

\section{Correctness Proofs}

Can we ensure that the encoding and decoding algorithms work as expected? For example, can we ensure that the indices $d_i$ produced by dividing the remaining sum by a factorial and taking the floor will always result in a valid index for the remaining items in our reduced list? Can we prove that $s_0$ will always equal $0$, as is presupposed by the decoding method? Are we sure that the decoding algorithm will always reproduce the original $s_n$ value? In this section, we prove the correctness of several aspects of the factoradic encoding and decoding algorithms, answering all of the aforementioned questions in the affirmative.

We begin by restating a few definitions formally, where $i \in \mathbb{Z}^+$ is some positive integer in what follows.

\begin{definition}[$d_i$]\label{def:DI}
    $d_i := \left\lfloor \frac{s_i}{(i-1)!} \right\rfloor$.
\end{definition}

\begin{definition}[$s_{i-1}$]\label{def:SI-1}
    $s_{i-1} := s_i - d_i \cdot (i-1)!$
\end{definition}

\begin{definition}[$s_{i}$]\label{def:SI}
    $s_{i} := s_{i-1} + d_i \cdot (i-1)!$
\end{definition}

Given the above definitions, we now prove a series of propositions, beginning with a needed lemma.

\begin{lemma}[]\label{lem:S0=0}
    For $s_n \in \mathbb{Z}^+$, $s_0 = 0$.
\end{lemma}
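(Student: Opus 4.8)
The plan is to recognize the update rule of Definition~\ref{def:SI-1} as nothing more than reduction modulo $(i-1)!$. Since $d_i = \lfloor s_i/(i-1)!\rfloor$ by Definition~\ref{def:DI}, we have $s_{i-1} = s_i - d_i\,(i-1)! = s_i \bmod (i-1)!$ whenever $s_i$ is a nonnegative integer, and in that case $0 \le s_{i-1} < (i-1)!$. The lemma then falls out by applying this observation all the way down to $i = 1$.

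First I would establish, by downward induction on $i$ running from $n$ down to $1$, that every $s_i$ is a nonnegative integer. The base case is immediate from the hypothesis $s_n \in \mathbb{Z}^+$. For the inductive step, assume $s_i \in \mathbb{Z}_{\ge 0}$; then $d_i = \lfloor s_i/(i-1)!\rfloor$ is a nonnegative integer, so $s_{i-1} = s_i - d_i\,(i-1)!$ is an integer (an integer combination of integers), and it is nonnegative because $d_i\,(i-1)! \le s_i$ by the defining property of the floor function. Hence the modular interpretation above is legitimate at every step.

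Then I would conclude by specializing to $i = 1$. Here $(i-1)! = 0! = 1$, so the bound from the first paragraph gives $0 \le s_0 < 1$, and since $s_0$ is an integer this forces $s_0 = 0$. (Equivalently, one can note directly that $d_1 = \lfloor s_1/0!\rfloor = s_1$ because $s_1$ is a nonnegative integer, whence $s_0 = s_1 - d_1 \cdot 1 = 0$.)

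The argument is short, and there is no single hard obstacle; the only points requiring care are (i) verifying that the $s_i$ remain nonnegative integers throughout, so that the floor/remainder reasoning is valid at each step, and (ii) remembering that $0! = 1$, which is precisely what collapses the range $[0,(i-1)!)$ to the single value $0$ when $i = 1$. Note that this lemma does not use the choice of $n$ as the smallest integer with $n! > s_n$; that hypothesis will instead be needed to show the $d_i$ are valid indices into the shrinking list $r$.
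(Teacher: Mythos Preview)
Your argument is correct and essentially the same as the paper's: both observe that integrality of $s_n$ propagates to all $s_i$, and then specialize to $i=1$ where $0!=1$ forces $d_1=\lfloor s_1\rfloor=s_1$ and hence $s_0=0$. Your modular framing and the additional nonnegativity check are sound but not needed here---the paper proves nonnegativity separately as Proposition~\ref{prop:NONNEGATIVITY}.
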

\begin{proof}
Under Definition \ref{def:SI-1}, we have 
\begin{align*}
s_0 &= s_1 - \left\lfloor \frac{s_1}{0!} \right\rfloor 0! = s_1 - \left\lfloor s_1 \right\rfloor.
\end{align*}
Beginning with integer $s_n$, at each $i-1$ we take an integer $s_i$ and subtract another integer $d_i \cdot (i-1)!$ from it, which implies $s_1$ is an integer as well, and thus $\left\lfloor s_1 \right\rfloor = s_1$. Therefore, $s_0 = s_1 - \left\lfloor s_1 \right\rfloor = s_1 - s_1 = 0.$
\end{proof}

\begin{proposition}[Nonnegativity of $s_{i}$]\label{prop:NONNEGATIVITY}
    $s_{i} \geq 0$.
\end{proposition}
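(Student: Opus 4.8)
The plan is to prove this by downward induction on $i$, starting from $i = n$ and using the recurrence of Definition~\ref{def:SI-1}. The base case is immediate: by hypothesis $s_n \in \mathbb{Z}^+$, so $s_n \geq 0$.

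For the inductive step, suppose $s_i \geq 0$ for some $i$ with $1 \leq i \leq n$; I want to conclude $s_{i-1} \geq 0$. By Definition~\ref{def:DI}, $d_i = \lfloor s_i / (i-1)! \rfloor$. Since $(i-1)! \geq 1 > 0$ and $s_i \geq 0$ by the inductive hypothesis, the quotient $s_i/(i-1)!$ is nonnegative, and the defining property of the floor function gives $d_i \leq s_i/(i-1)!$. Multiplying through by the strictly positive quantity $(i-1)!$ preserves the inequality, so $d_i \cdot (i-1)! \leq s_i$, and hence by Definition~\ref{def:SI-1} we get $s_{i-1} = s_i - d_i \cdot (i-1)! \geq 0$. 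This closes the induction and establishes $s_i \geq 0$ for every $i$ with $0 \leq i \leq n$.

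The only point worth flagging is that the argument relies at each step on $(i-1)!$ being a strictly positive integer (so that division is well-defined and multiplication preserves order) together with the elementary bound $\lfloor t \rfloor \leq t$. Integrality of $s_i$ — the fact used in Lemma~\ref{lem:S0=0} — is not needed for nonnegativity, so there is no real obstacle here: the proposition is a one-line consequence of the floor inequality applied along the downward recurrence. (If one prefers, the same inequality $0 \leq s_{i-1} < (i-1)!$ can be recorded now, since $s_{i-1} = s_i - \lfloor s_i/(i-1)!\rfloor (i-1)!$ is exactly the remainder of $s_i$ modulo $(i-1)!$; this stronger statement will likely be convenient for the later propositions about valid indices, but only the nonnegativity half is claimed here.)
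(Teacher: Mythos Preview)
Your proof is correct and uses essentially the same idea as the paper: the key inequality $\lfloor t\rfloor \le t$ applied to $t = s_i/(i-1)!$, followed by multiplication by the positive quantity $(i-1)!$. The paper presents this as a direct one-step argument (showing $s_i \ge 0$ from $s_{i+1}$ without explicitly naming it induction), whereas you frame it as downward induction and invoke the hypothesis $s_i \ge 0$ to call the quotient nonnegative; that invocation is actually unnecessary since $\lfloor t\rfloor \le t$ holds for all real $t$, but the argument is otherwise identical.
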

\begin{proof}
    For integer $i \geq 0$, we have $s_{i+1}/i! \geq \left\lfloor s_{i+1}/i!\right\rfloor$ which implies
    \begin{align*}
        s_{i+1} &\geq \left\lfloor \frac{s_{i+1}}{i!}\right\rfloor i!, \text{ implying }\\
        s_{i+1} &- \left\lfloor \frac{s_{i+1}}{i!}\right\rfloor i! \geq 0, \text{ and thus }\\
        s_{i} &= s_{i+1} - d_{i+1} \cdot i! \geq 0.
    \end{align*}
\end{proof}

\begin{theorem}[Reversibility of Encoding Process]\label{thm:REVERSIBILITY}
    \[
        s_n = \sum_{i=1}^{n} d_i \cdot(i-1)!.
    \]
\end{theorem}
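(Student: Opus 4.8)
The plan is to prove the identity by telescoping the recurrence in Definition~\ref{def:SI}. Writing $s_i = s_{i-1} + d_i \cdot (i-1)!$ for each $i$ from $1$ to $n$ and unrolling step by step, every intermediate term $s_1, \ldots, s_{n-1}$ cancels, leaving $s_n = s_0 + \sum_{i=1}^{n} d_i \cdot (i-1)!$. It then remains only to discharge the $s_0$ term, which is exactly what Lemma~\ref{lem:S0=0} supplies: since $s_n \in \mathbb{Z}^+$ by construction (it is the base-10 encoding of the plaintext), we have $s_0 = 0$, and the claimed identity follows immediately.

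Concretely, I would make the telescoping rigorous by a finite induction on the number of unrolling steps. I'd establish the auxiliary claim that $s_n = s_{n-k} + \sum_{i=n-k+1}^{n} d_i \cdot (i-1)!$ for all $0 \le k \le n$. The base case $k = 0$ is the empty-sum tautology $s_n = s_n$; for the inductive step, I apply Definition~\ref{def:SI} to rewrite $s_{n-k}$ as $s_{n-k-1} + d_{n-k} \cdot (n-k-1)!$ and absorb the new term into the summation, shifting its lower bound from $n-k+1$ down to $n-k$. Instantiating the claim at $k = n$ gives $s_n = s_0 + \sum_{i=1}^{n} d_i \cdot (i-1)!$, and Lemma~\ref{lem:S0=0} closes the argument.

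The proof is essentially bookkeeping, so there is no deep obstacle; the only points requiring care are the summation index arithmetic in the inductive step (in particular making sure the final term $d_1 \cdot 0!$ is included and that the reindexing is off-by-none) and confirming that the hypothesis $s_n \in \mathbb{Z}^+$ needed to invoke Lemma~\ref{lem:S0=0} is in force. Conceptually, this theorem is what licenses the decoding algorithm: once each $d_i$ has been recovered from the permuted list $q$ as the index of $q[n-i]$ in the residual baseline list $r$, forming $\sum_{i=1}^{n} d_i \cdot (i-1)!$ reconstructs exactly $s_n$, hence the original message.
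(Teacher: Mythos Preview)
Your proposal is correct and follows essentially the same approach as the paper: both arguments telescope the recurrence (the paper writes $d_i\cdot(i-1)! = s_i - s_{i-1}$ from Definition~\ref{def:SI-1} and sums directly, while you unroll the equivalent Definition~\ref{def:SI} inductively) and then invoke Lemma~\ref{lem:S0=0} to eliminate $s_0$. The only difference is presentational---your explicit induction on $k$ versus the paper's one-line telescoping sum---so there is nothing substantive to compare.
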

\begin{proof}
    Definition \ref{def:SI-1} implies $d_i \cdot (i-1)! = s_i - s_{i-1}$, so we have
    \begin{align*}
        \sum_{i=1}^{n} d_i \cdot(i-1)! &= \sum_{i=1}^{n} (s_i - s_{i-1}) \\
            &= \sum_{i=1}^{n} s_i - \sum_{i=1}^{n} s_{i-1} \\
            &= s_n + \sum_{i=1}^{n-1} s_i - \sum_{i=0}^{n-1} s_{i} \\
            &= s_n - s_0 = s_n,
    \end{align*}
    where the final equality follows from Lemma~\ref{lem:S0=0}.
\end{proof}
These results tell us that a necessary condition of the decoding procedure (namely, $s_0 = 0$) holds, that the encoding steps will never produce a negative $s_i$ value, and that any message encoded by the encoding process of Algorithm~\ref{alg:ENCODING} can be decoded using Algorithm~\ref{alg:DECODING}, since each $d_i$ can be obtained from the permuted list $q$. Theorem~\ref{thm:REVERSIBILITY} also shows us that we cannot uniquely recover messages with trailing ``zeroes,'' which, practically speaking, means that if we have $\texttt{a} = 0$ in our alphabet then messages differing only in the number of $\texttt{a}$'s as final characters will map to the same $s_n$ value. Since the trailing characters are equivalent to the high-order digits of our base-$b$ number, this is similar to $0005$, $05$, and $5$ all representing the same number. 

Finally, we prove that the index chosen by $d_i$ is always between 0 and $i-1$, so that we never choose an index outside of our available array indices.
\begin{proposition}[Valid Array Indexing]
 $d_i \in \{0,\ldots, i-1\}$.
\end{proposition}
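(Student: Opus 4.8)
The plan is to prove the two bounds $d_i \ge 0$ and $d_i \le i-1$ separately. The lower bound is immediate: by Proposition~\ref{prop:NONNEGATIVITY} we have $s_i \ge 0$, and $(i-1)! > 0$, so $d_i = \lfloor s_i/(i-1)!\rfloor \ge 0$. The substance is in the upper bound, and the key claim I would isolate as an intermediate fact is that $s_i < i!$ for every $i$ in the range $1 \le i \le n$. Granting this, $s_i$ is a nonnegative integer (by the integrality argument already used in the proof of Lemma~\ref{lem:S0=0}), so $s_i \le i! - 1$, whence
\[
    d_i = \left\lfloor \frac{s_i}{(i-1)!} \right\rfloor \le \frac{i! - 1}{(i-1)!} = i - \frac{1}{(i-1)!} < i,
\]
and since $d_i$ is an integer this gives $d_i \le i-1$. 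Together with $d_i \ge 0$ this yields $d_i \in \{0,\ldots,i-1\}$.

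It remains to establish $s_i < i!$, which I would do by downward induction on $i$ from $n$ to $1$. The base case is exactly the defining property of $n$ in the encoding algorithm: $n$ is chosen as the smallest positive integer with $n! > s = s_n$, so $s_n < n!$. For the inductive step, suppose $s_{i+1} < (i+1)!$ for some $i$ with $1 \le i < n$. By Definition~\ref{def:SI-1} and Definition~\ref{def:DI},
\[
    s_i = s_{i+1} - d_{i+1}\cdot i! = s_{i+1} - \left\lfloor \frac{s_{i+1}}{i!} \right\rfloor i!,
\]
which is precisely the remainder of $s_{i+1}$ upon division by $i!$; since $s_{i+1} \ge 0$ is an integer, this remainder satisfies $0 \le s_i < i! < (i+1)!$ as well, but the bound we need, $s_i < i!$, is already in hand. (Note this actually shows the slightly stronger statement $s_i < i!$ for $i < n$ and $s_n < n!$, uniformly $s_i < i!$.)

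I expect the only real subtlety to be bookkeeping around the base case and the index ranges: one must use the algorithm's choice of $n$ to seed the induction, and invoke the integrality of the $s_i$ (inherited from $s_n \in \mathbb{Z}^+$ and the fact that each step subtracts an integer multiple of a factorial, exactly as argued for Lemma~\ref{lem:S0=0}) both to pass from $s_i < i!$ to $s_i \le i! - 1$ and to conclude $d_i \le i-1$ from $d_i < i$. No heavy computation is involved; the argument is essentially the observation that the $s_i$ are the successive factorial-base remainders of $s$, so each is bounded by the corresponding factorial, which is what makes every index $d_i$ land in the valid range.
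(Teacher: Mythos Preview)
Your proof is correct and takes a genuinely different route from the paper's. You isolate and prove the invariant $s_i < i!$ by downward induction, seeding it with the algorithm's defining choice $s_n < n!$, and then read off $d_i \le i-1$ from $s_i \le i!-1$. The paper instead rewrites $d_{i-1}$ algebraically as $\left\lfloor (i-1)\bigl(\tfrac{s_i}{(i-1)!} - \lfloor \tfrac{s_i}{(i-1)!}\rfloor\bigr)\right\rfloor$ and bounds the fractional part by~$1$. Your approach has the concrete advantage of explicitly covering the top index $d_n$: the paper's manipulation expresses $d_{i-1}$ in terms of $s_i$, so it never directly treats $d_n$, which requires exactly the bound $s_n < n!$ you invoke as your base case. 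One small streamlining: your inductive step does not actually use the hypothesis $s_{i+1} < (i+1)!$, since the remainder identity already forces $0 \le s_i < i!$ for every $i < n$ regardless of the size of $s_{i+1}$; the argument therefore reduces to that single observation together with the separate check $s_n < n!$ at the top.
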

\begin{proof}
 For the lower bound, Proposition~\ref{prop:NONNEGATIVITY} ensures nonnegativity. For the upper bound, by Definition~\ref{def:SI-1} we have
\begin{align*}
    d_{i-1} &= \left\lfloor \frac{s_{i-1}}{(i-2)!} \right\rfloor = \left\lfloor \frac{s_{i}- d_i\cdot (i-1)!}{(i-2)!} \right\rfloor \\
            &= \left\lfloor \frac{s_{i}}{(i-2)!}- d_i\cdot (i-1) \right\rfloor \\
            &= \left\lfloor \frac{s_{i}(i-1)}{(i-1)!}- \left\lfloor \frac{s_{i}}{(i-1)!} \right\rfloor (i-1) \right\rfloor \\
            &= \left\lfloor (i-1)\left(\frac{s_{i}}{(i-1)!}- \left\lfloor \frac{s_{i}}{(i-1)!} \right\rfloor\right)\right\rfloor.
\end{align*}
Note that because $\frac{s_{i}}{(i-1)!}- \left\lfloor \frac{s_{i}}{(i-1)!} \right\rfloor \leq 1$, we obtain
\begin{align*}
    d_{i-1} &\leq \left\lfloor (i-1) \cdot 1 \right\rfloor = i-1.
\end{align*}
\end{proof}

\section{Channel Properties}

How much information can be transmitted across a permutation encoding channel like the one suggested here? Given $n!$ equally likely permutations to choose from, one can transmit $$\log_2 n! \approx n\log_2 n - n\log_2 e + O(\log_2 n)$$ bits of information when transmitting a message using a permuted list of $n$ elements. For example, a uniformly selected permutation of 10 elements has a capacity of 21.79 bits, which is roughly four letters drawn from a 27 element alphabet (letters plus a space). This assumes that all permutations are equally likely, which will not necessarily hold for the method given above. 

\subsection{Bias in Element at First Position}

\begin{figure}[htbp]
 \includegraphics[width=1.\linewidth]{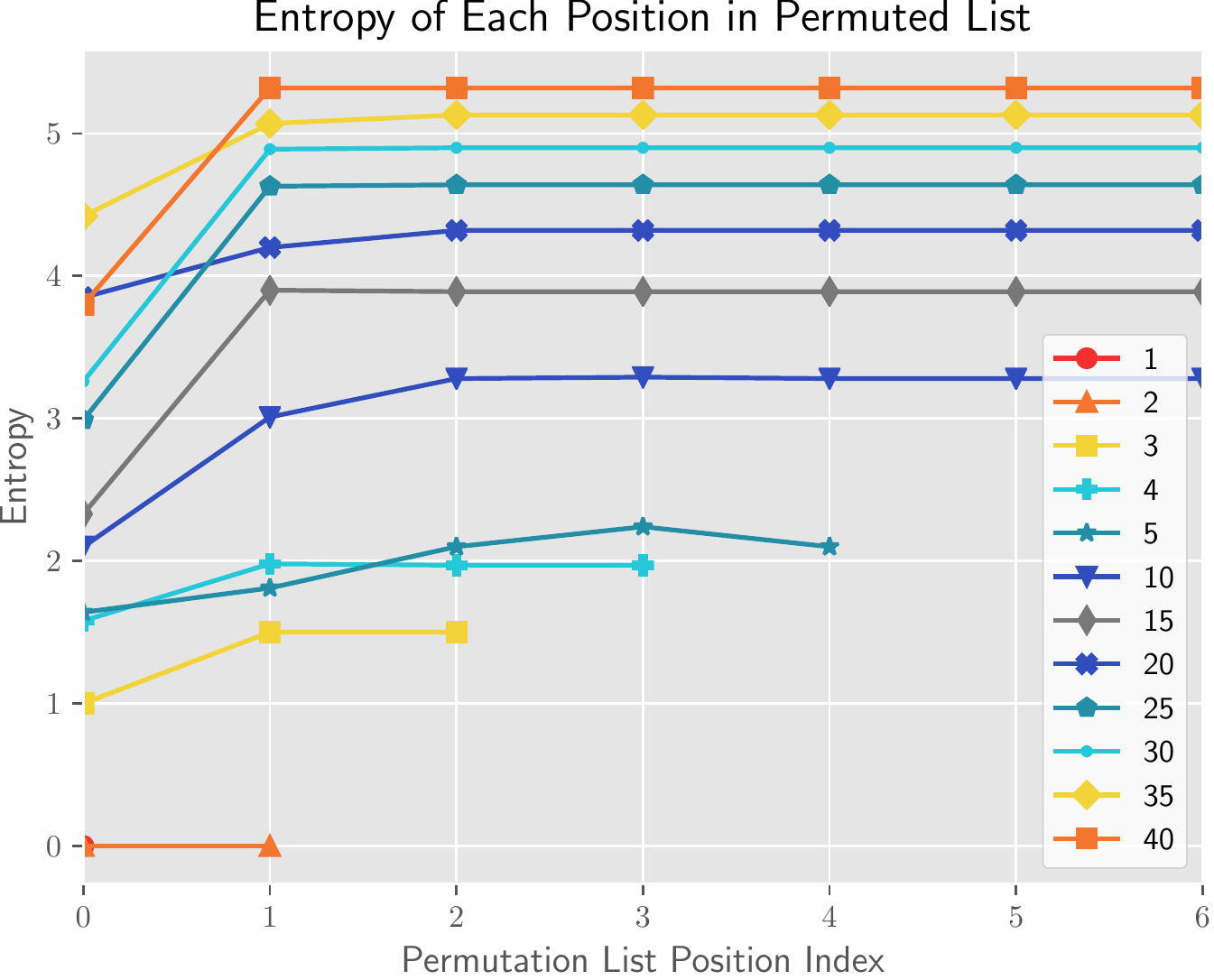}
 \caption{Entropy at each position in permutation list based on the estimated distribution of number frequency at that position.}
  \Description{Entropy at each position in permutation list based on the estimated distribution of item occurrence frequencies at that position.}
  \label{fig:ENTROPY-AT-POSITION}
\end{figure}

Our factoradic encoding method produces biased permutation lists, where the uncertainty of which element is chosen for the first position is reduced relative to the choices for the other positions in the permuted list. Figure~\ref{fig:ENTROPY-AT-POSITION} shows, for various permuted list lengths, what the uncertainty (\textit{entropy}) at each position is. The entropies were computed relative to estimated distributions for element occurrence at each position, based on one million uniformly randomly sampled plaintext messages of each length, from length 1 to length 50. The figure plots the estimated entropy at each position for a subset of lengths, where a line terminates based on the number of elements in the permuted list. For example, the line labeled ``2'' corresponds to permutation lists containing two elements, whereas the line labeled ``10'' corresponds to lists with ten elements. Only the first seven positions are shown, since the lines are roughly horizontal for all remaining positions.

We notice that for lists with one and two elements the entropy is estimated to be zero at all positions, meaning that the same elements are always selected at those positions. While this makes sense for one-element lists (having only one element to select), it is surprising that two element lists behave in the same way. Second, we notice that, with the exception of the first two, in all cases plotted  (along with all those not shown) the entropy of the first position is lower than for all other positions. This implies that the frequency with which certain elements are selected to be output as the first element of the permuted output list is skewed, reducing the uncertainty at that site. For example, if smaller elements are typically chosen for the first position, this will reduce the entropy at that position. Given this bias, how much does it reduce the overall capacity of the channel?

We can gauge the reduction in capacity as follows. For a given permutation list, each position is dependent on the others; selecting a specific element for the $i$th position means it is no longer available to be selected at the $j$th. However, if we consider the average behavior at a list position over many independent trials, we can compute the uncertainty at each position, as was done in Figure~\ref{fig:ENTROPY-AT-POSITION}. We can consider each position as a weighted $|q|$-sided die, where the die is fair if each element of the original lists shows up at this position with equal frequency, in expectation. To the degree that the distribution of elements diverges from uniformity is the degree to which channel capacity will be reduced. 

Figure~\ref{fig:TOTAL-ENTROPY} shows the growth in entropy of the channel (summing the entropy for all positions in a list) as a function of list length, comparing it to the entropy of a channel where each position has a uniform distribution over possible elements (maximizing entropy). For such a uniform channel, we have $n$ options at each position, giving a Shannon surprisal of $\log_2 n$ bits per position, for a total of $n\log_2 n$ bits of entropy. Comparing our observed entropy to this maximum total for the channel, we see that the encoding procedure produces outputs with total entropy very close to the maximum; the largest observed deficit is a mere 2 bits. Thus, the bias of the first position does not significantly reduce the overall message transmission capacity of the channel.

\begin{figure}[htbp]
 \includegraphics[width=1.\linewidth]{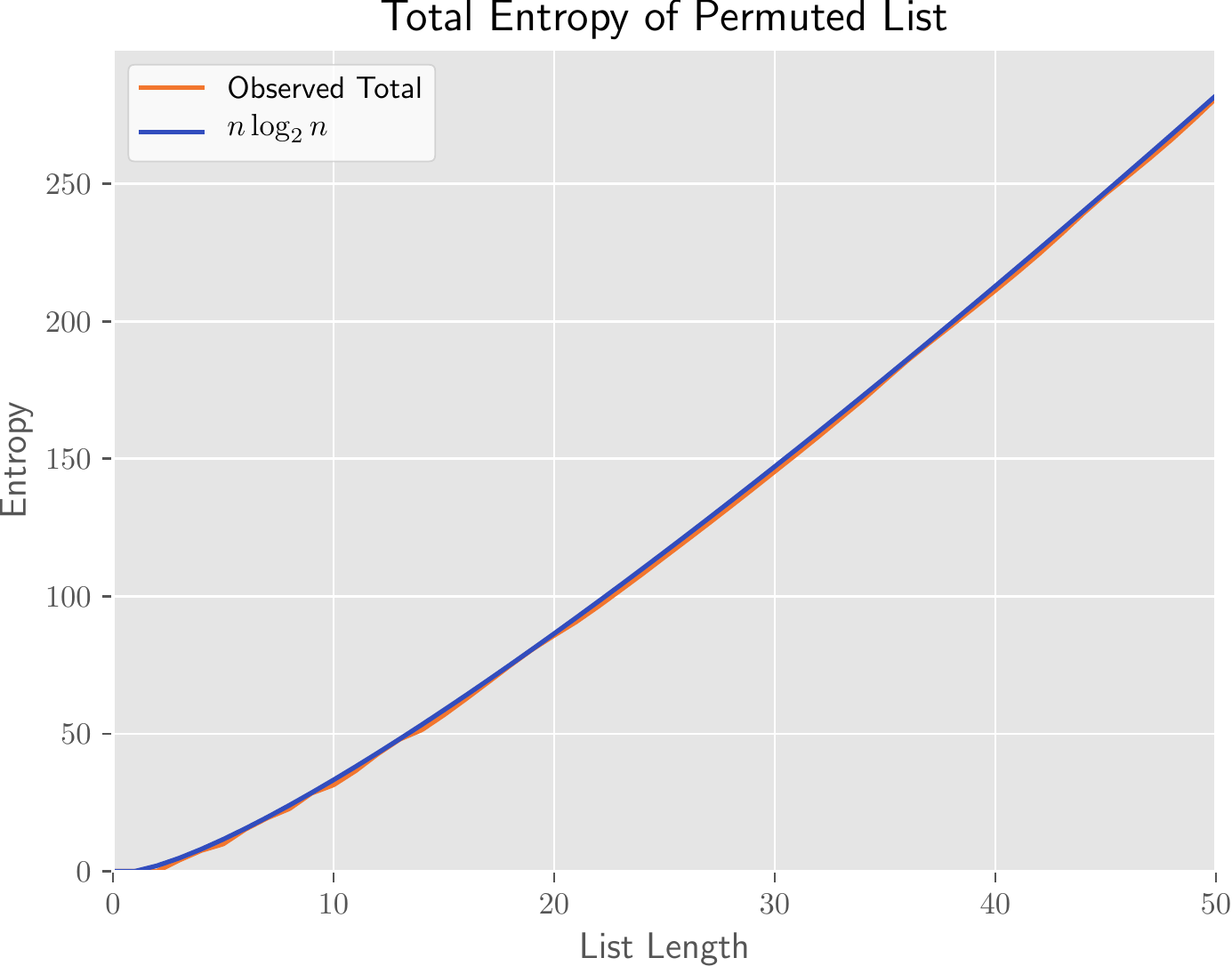}
 \caption{Total observed entropy for permutation lists, summed across all positions, based on the estimated distribution of number frequency at each position.}
  \Description{Total observed entropy for permutation lists, summed across all positions, based on the estimated distribution of number frequency at each position.}
  \label{fig:TOTAL-ENTROPY}
\end{figure}

\subsection{Message Length vs. Permutation Length}

\begin{figure}[htbp]
 \includegraphics[width=1.0\linewidth]{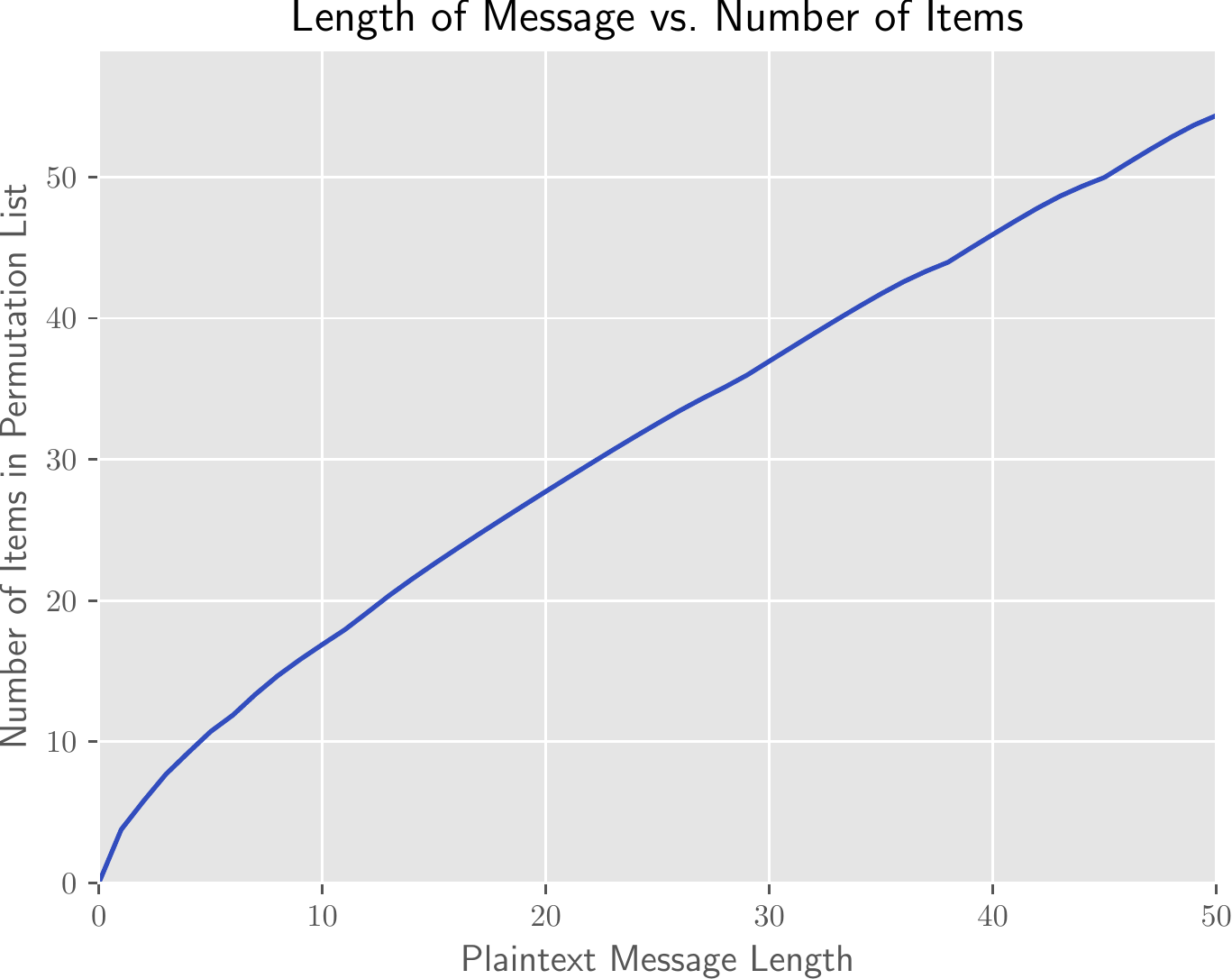}
 \caption{Scaling of plaintext message length versus the estimated mean number of items in the permuted list $q$, surrounded by (imperceptible) 95\% confidence interval.}
  \Description{Estimate of mean number of items for a given plaintext message length, surrounded by a (imperceptible) 95\% confidence interval. Plot was produced from one million uniformly selected random plaintext messages for each message length.}
  \label{fig:MESSAGE-LENGTH}
\end{figure}

Figure~\ref{fig:MESSAGE-LENGTH} shows the estimated mean number of items for a given plaintext message length, surrounded by a 95\% confidence interval. The plot was produced by sampling one million uniformly random plaintext messages for each message length. As seen, after brief super-linear growth, the number of items in the output permutation list needed to encode a given plaintext message becomes close to linear in the length of the message. 

\section{Improvements and Limitations}

\subsection{Optimizing Information Transfer}

Given the unequal likelihood of letters in the English alphabet, one can modify the numbering scheme of mapping alphabetic characters to base-$b$ digits to optimize for assigning small numbers to frequently occurring letters (like `e', `t', and `a'), which will then lower the magnitude of $s$, the equivalent base-10 representation of the plaintext message. A smaller magnitude $s$ requires a smaller list of elements to encode, transmitting more information per element in the list. Such optimizations have not been done here, and are accessible opportunities for future research.

\subsection{Increasing Security}

Given knowledge of the decoding message and the canonical ordering, decoding permuted messages is a trivial task. To increase the security of the steganographic channel one can choose an arbitrary permutation of the list of integers $r$ with which to encode the messages. Trying to decode relative to the wrong permutation will produce gibberish; for example, decoding the enciphered message \texttt{hello} relative to randomly selected permutation \texttt{[5,0,9,10,1,4,6,3,2,8,7]} (rather than the canonical ordering \texttt{[0,$\ldots$,10]}) produces the nonsense string \texttt{cbhwdc} as output. The astronomical number of possible permutations for large $n$ makes a brute-force search for a correct baseline permutation intractable. Thus, introducing a secret and uniformly randomly selected baseline ordering to act as a secret key can add an additional layer of security on top of that provided by the steganographic nature of the channel.

\subsection{Limitations}

As mentioned previously, a limitation of our method is that one cannot ensure unique base-$b$ messages always map to unique $s_n$ values when the message contains trailing characters mapping to the zero-valued element. A simple workaround to is to append a non-zero-valued character to the end of the message before encoding. By providing one particular encoding and decoding method based on factorial number systems, we have a practical example of stenganographic permutation encoding . Undoubtedly, better and more efficient permutation encoding schemes exist, which provide an opportunity for future research.

\section{Conclusion}

Returning to our initial story, you've perhaps identified a plot hole: if Arion has access to a computer on which to run the algorithm, why does he not simply use that computer to send messages directly? While the story is fictional, the method of encoding hidden messages using list of names or common items has been shown to be possible. We have explored one particular method for steganographic permutation encoding, and proven the correctness of key aspects of the encoding and decoding methods presented. In addition to the descriptions of the methods and their pseudocode, example Python implementations are given in Appendix \ref{app:CODE}.

Thinking of permuted lists as a channel within which to transmit messages, rankings of any sort can function as secret channels, as long as they're subject to a total ordering on their elements. For example, student rankings can be used to transmit short messages in a steganographic manner. The same can be done with lists of vehicle identification numbers, phone numbers, license plates, best basketball players of all time, and of course, lists of the best modern gangster movies. 

\bibliographystyle{ACM-Reference-Format}
\bibliography{bibliography}

\appendix

\section{Sample Code}\label{app:CODE}

We give an example implementation in Python of the encoding and decoding algorithms, which assume an alphabet consisting of the Latin alphabetic characters [a--z] plus a space character.

\subsection{encode.py}

The following code requires installation of the \verb+mpmath+ Python module.
{\footnotesize
\begin{verbatim}
import sys
from mpmath import mp, mpf
mp.dps = 1000 

def next_factorial(s):
    i = 1.0
    while s >= 1:
        i += 1
        s = s / mpf(i)
    return int(i)

def facs_table(n):
    facs = [1,]
    for i in range(n - 1):
        facs.append(facs[-1] * (i + 1))
    return facs        

def encode(char_values, alpha_base):
    s = sum([(alpha_base**i) * value for \
            i, value in enumerate(char_values)])
    n = next_factorial(s)
    items = [i for i in range(n)]
    out = []
    facs = facs_table(n)
    for i in range(n):
        d = int(s / facs[-(i + 1)])
        s -= d * facs[-(i + 1)]
        out.append(items[d])
        del items[d]
    return out

def main(content):
    x = content.strip()
    alpha = "abcdefghijklmnopqrstuvwxyz "[:]
    character_values = [alpha.index(letter) \
            for letter in x]
    print(encode(character_values, len(alpha)))

if __name__ == "__main__":
    main(sys.stdin.read())
\end{verbatim}
}

\subsection{decode.py} 

The following code requires installation of the \verb+mpmath+ Python module. 
{\footnotesize
\begin{verbatim}
import sys
from mpmath import mp, mpf
mp.dps = 1000

def facs_table(n):
    facs = [1,]
    for i in range(n - 1):
        facs.append(facs[-1] * (i + 1))
    return facs  

def decode(items, alphabet):
    alpha_base = len(alphabet)
    out = ""
    r = items[:]
    r.sort()
    facs = facs_table(len(items))
    s = 0
    for i, item in enumerate(items):
        s += r.index(item) * facs[-(i + 1)]
        r.remove(item)
    while s > 0:
        part = int(s % mpf(alpha_base))
        out += alphabet[part]
        s = int(s / mpf(alpha_base))
    return out

def main(content):    
    items = [int(item) for item in \
         content.strip(" []\n ").split(",")]
    alpha = "abcdefghijklmnopqrstuvwxyz "[:]
    print(decode(items, alpha))

if __name__ == "__main__":
    main(sys.stdin.read())
\end{verbatim}
}
\subsection{Usage}

To use the encoder code, simply run encode.py from the command line using the Python interpreter (code written for versions $\geq$ 2.*.*), and feed into standard input the phrase you'd like encoded.

\begin{verbatim}
\> echo hello world|python encode.py
input: hello world
output: [1,17,13,5,4,0,3,12,8,15,14,11,16,7,9,10,2,6]
\end{verbatim}

To decode, run the decode.py script and pass in the list of integers to standard input, in the same format as given by the encode method, which will recover the plaintext message.

\begin{verbatim}
\> echo [3,8,6,5,1,7,0,4,10,2,12,9,11]|python decode.py
input: [3,8,6,5,1,7,0,4,10,2,12,9,11]
output: test me
\end{verbatim}
\end{document}